\def\BibTeX{{\rm B\kern-.05em{\sc i\kern-.025em b}\kern-.08em
    T\kern-.1667em\lower.7ex\hbox{E}\kern-.125emX}}
\newtheorem{theorem}{Theorem}
\newtheorem{lemma}{Lemma}
\newtheorem{proposition}{Proposition}
\newtheorem{definition}{Definition}
\newtheorem{example}{Example}
\newcounter{phase}[algorithm]
\newlength{\phaserulewidth}
\newcommand{\setphaserulewidth}{\setlength{\phaserulewidth}}
\newcommand{\phase}[1]{%
  \vspace{-1.25ex}
  \Statex\leavevmode\llap{\rule{\dimexpr\labelwidth+\labelsep}{\phaserulewidth}}\rule{\linewidth}{\phaserulewidth}
  \Statex\strut\refstepcounter{phase}\textit{Phase~\thephase~--~#1}
  \vspace{-1.25ex}\Statex\leavevmode\llap{\rule{\dimexpr\labelwidth+\labelsep}{\phaserulewidth}}\rule{\linewidth}{\phaserulewidth}}
\pgfplotsset{compat=newest}
\newcommand{\ot}[1]{{\textcolor{red}{ot: #1}}}
\newcommand{\dw}[1]{{\textcolor{green}{}}}
\newcommand{\bc}[1]{{\textcolor{black}{#1}}}
\begin{document}

\title{Transversal Toffoli-gate in Hybrid-code System }
\author{
    \IEEEauthorblockN{Dawei~Jiao$^1$, Mahdi Bayanifar$^1$, Alexei Ashikhmin$^2$, Olav Tirkkonen$^1$} 
             \\
              \IEEEauthorblockA{\em$^1$Department of Communications and Networking, Aalto University, Finland\\
              $^{2}$Nokia Bell Labs, Murray Hill, New Jersey, USA
              \\
              Email: \{dawei.jiao, mahdi.bayanifar, olav.tirkkonen\}@aalto.fi} alexei.ashikhmin@nokia-bell-labs.com \\
}
\date{Sep 2025}

\maketitle

\begin{abstract}
    We study the transversality of the Toffoli gate in a hybrid-code system that employs two quantum error correction codes with special structure. 
    We find that a system using a triorthogonal code with its paired code supports a fully transversal implementation of the Toffoli gate.
    Through circuit-level analysis, we prove the transversality of the Toffoli operation in this system.
    Based on this hybrid-code framework, we propose a Toffoli state distillation protocol that does not rely on pre-distilled $\mathbf{T}$-gate magic states.
    In our approach, the Toffoli state is directly distilled layer by layer within the hybrid-code system using only transversal operations.
    Numerical simulations demonstrate that our method uses approximately 50\% fewer qubit resources than previously reported protocols.
\end{abstract}

\section{Introduction}

Fault-tolerant quantum computation relies on carefully designed circuits to prevent the propagation of errors during quantum operations~\cite{Gottesman_1998,eastin2009restrictions}. Transversal gates play an important role in this process, as they act in parallel on each qubit of encoded blocks, thereby naturally preserving the fault-tolerance. Compared with other fault-tolerant approaches, such as magic state and gauge fixing, the transversal gates approach does not require any additional resource costs~\cite{anderson2014codeswitch,bravyi2005magicstate,horsman2012lattice}. Every quantum error correction code (QECC) supports a gate set that can be applied transversally. However, according to the Easting-Knill theorem, it is not possible to have a universal and transversal set of gates simultaneously for a given QECC~\cite{eastin2009restrictions}. 
Some conventional QECCs, such as surface code and Steane code, support
the Clifford gate set transversal realizations, but universal quantum computation additionally requires the inclusion of non-Clifford gates such as the $\mathbf{T}$-gate or the controlled-controlled-phase (CCZ) gate~\cite{yoder2017universal}.

The transversality of some non-Clifford gates, such as $\mathbf{T}$-gate and CCZ gate, have been extensively studied~\cite{bravyi2005magicstate,narayana2020optimality, newman2017limitations}.
In particular, triorthogonal codes have been shown to admit transversal $\mathbf{T}$ and CCZ gates under certain constraints on their logical operators~\cite{bravyi2005magicstate,narayana2020optimality}. 
Besides these gates, there is another well-known non-Clifford gate, called Toffoli-gate.
It is a three-qubit gate that performs a controlled-NOT operation on
a target qubit condition on the state of two control qubits.
The Toffoli gate is frequently used in both reversible classical computation~\cite{toffoli1980reversible} and more complex algorithms, such as Shor's algorithm~\cite{vandersypen2001experimental}. The complexity of implementing Toffoli gate highly depends on the overall complexity of the algorithm.
However, the transversality of the Toffoli gate has rarely been discussed.
Even it is shown that no single QECC can support a transversal Toffoli gate~\cite{newman2017limitations}. 
The use of multiple QECCs in one quantum system in a transversal manner
is investigated in~\cite{bayanifar2025transversality}, where the authors introduced a mirrored code design that enables a transversal CZ gate between two different QECCs.

Motivated by these observations, in this paper we study the transversality of the Toffoli gate using different QECCs. In particular, we identify a scheme that employs two distinct QECCs, enabling a fully transversal implementation of the Toffoli gate. 
To support this approach, we introduce the concept of a \emph{hybrid-code system}, in which multiple QECCs coexist and interact within the same quantum system. In the context of the logical Toffoli-gate, we construct such a system by encoding the control and target qubits using distinct, but structural-related codes.
Specifically, we show that combining a triorthogonal code with its mirrored code 
enables a transversal realization of the Toffoli-gate.

Furthermore, this hybrid-code system can be used as distillation protocol to prepare high-fidelity Toffoli magic state, which is used to realize high-fidelity Toffoli gates. 
Since magic states are key resources in fault-tolerant quantum computation, minimizing the cost of their preparation is crucial.
There exist several  different Toffoli state distillation protocols~\cite{eastin2013distilling,jones2013Toffoli,haah2018Toffoli}, each using different circuit decompositions of Toffoli gate into $\mathbf{T}$ and Clifford gates. In these approaches, the $\mathbf{T}$-gate is implemented using distilled magic states, and the final Toffoli state is obtained by using a large number of magic states. 
In contrast, our distillation protocol is implemented entirely in a transversal manner, eliminating the need for magic states. The Toffoli state is directly distilled layer by layer within the hybrid-code system. Using this approach, we achieve approximately a $50\%$ reduction in qubit resources compared to existing protocols.

The rest of the paper is organized as follows: Sec.~\ref{Sec.Pre} introduces preliminaries on Calderbank-Shor-Steane (CSS) code, transversality and triorthogonal codes. In Sec.~\ref{Sec.ToffoliTr} we show the structure of our hybrid-code system and prove the Toffoli gate transversality. Our new proposed Toffoli state distillation procedure is mentioned in Sec.~\ref{Sec.ToffoliDis}. Finally, in Sec.~\ref{Sec:numerRes} using numerical simulations, we demonstrate that our proposed protocol reduces the qubit number cost compared to other Toffoli state distillation protocols.

\section{Preliminaries}
\label{Sec.Pre}
\subsection{CSS Codes}

CSS quantum error correction codes are a special case of stabilizer codes. 
They can be constructed based on two classical linear codes 
$\mathcal{C}_1\left(n, k_1, d_1\right)$ and $\mathcal{C}_2\left(n, k_2, d_2\right)$, such that $\mathcal{C}_2^{\perp} \subset \mathcal{C}_1$, with 
$\mathcal{C}_2^{\perp}$ the dual space of $\mathcal{C}_2$.
A quantum $[[n, k, d]]$ CSS code $\mathcal{Q} = \rm{CSS}\left(\mathcal{C}_1, \mathcal{C}_2\right)$, is defined as a
$2^k$-dimensional linear subspace of $\mathbb{C}^{2^n}$ with orthonormal basis~\cite{nielsen2002Book}:
\begin{equation}
	\lvert \bm{\psi} \rangle_L = \frac{1}{\sqrt{\left\lvert \mathcal{C}_2^{\perp}\right\rvert} }\sum_{\mathbf{y} \in \mathcal{C}_2^{\perp}}{| \mathbf{x}_\psi+\mathbf{y} \rangle}, \label{Eq:StablStCSS}
\end{equation}
where $\bm{\psi} \in \mathbb{F}_2^k$ and $\mathbf{x}_{\bm{\psi}} = \bm{\psi} \mathbf{A}$, where $\mathbf{A}\in \mathbb{F}_2^{k\times n}$ is a full-rank mapping matrix which is the generator of the quotient group $\mathcal{C}_1 / \mathcal{C}_2^{\perp}$, i.e., $\mathbf{A}\cong\mathcal{C}_1 / \mathcal{C}_2^{\perp}$.
Note that $k=k_1+k_2-n$ and the code minimum distance is  $d = \min\left(d_1, d_2^\perp \right)$.
The generator matrix of $\text{CSS}(C_1,C_2)$ is:
\begin{equation}
    \mathbf{G}^{\mathcal{Q}} = 
    \left[
    \begin{array}{c;{2pt/2pt}c}
        \mathbf{H}\left(\mathcal{C}_2 \right) & \mathbf{0} \\
        \hdashline
        \mathbf{0} & \mathbf{H}\left(\mathcal{C}_1 \right)
    \end{array}
    \right].
\end{equation}

 The dimension of $\mathbf{G}^{\mathcal{Q}}$ is $\left( n-k \right)\times2n$. Thus, with $n$ qubits and $n-k$ generators, $\mathcal{Q} $ encodes $k$ logical qubits. If $\mathcal{C}_1=\mathcal{C}_2=\mathcal{C}$, we have a symmetric CSS code  $\text{CSS}(\mathcal{C},\mathcal{C})$.

\subsection{Transversality}

Each QECC has its transversal gate set. The $\mathbf{U}$ gate transversality for a $[[n,k,d]]$ code is defined as:
\begin{equation}
    \mathbf{U}^{\otimes n}\lvert\bm{\psi}\rangle=\mathbf{U}_L^{\otimes k}\lvert\bm{\psi\rangle}
    \label{Eq.Transversal},
\end{equation}
where $\mathbf{U}_L$ is the logical $\mathbf{U}$ gate, and $\lvert\bm{\psi}\rangle$ is arbitrary logical state.
All CSS codes have CNOT transversality. Symmetric CSS codes have in addition complete Clifford gate transversality. For non-Clifford gate transversality, the most commonly considered one is $\mathbf{T}=\left[\begin{array}{cc}
     1& 0 \\
     0& e^{i\pi/4}
\end{array}\right]$ gate transversality.
Certain CSS codes belonging to a family known as triorthogonal codes have $T$-transversality~\cite{bravyi2005magicstate}. This comes at the cost of losing the transversality of some Clifford gates. 

In addition to transversality within the same code, we can also define  transversality between different codes.
In~\cite{bayanifar2025transversality}, authors found that the restrictions for having transversal logical CNOT from CSS code $\text{CSS} (\mathcal{C}_1,\mathcal{C}_2)$ to  $\text{CSS}(\mathcal{C}_3,\mathcal{C}_4)$  are:
\begin{equation}
    \mathcal{C}_1 / \mathcal{C}_2^{\perp} \cong \mathcal{C}_3 / \mathcal{C}_4^{\perp}, \quad  \mathcal{C}_2^\perp \subseteq \mathcal{C}_4^\perp.
    \label{Eq:CNOTCondition}
\end{equation}
Also, the sufficient conditions for two CSS code $\rm CSS(\mathcal{C}_1,\mathcal{C}_2)$ and  $\rm CSS(\mathcal{C}_3,\mathcal{C}_4)$ to have a transversal logical CZ gate are:
\begin{equation} \label{Eq:CZsuffcond1}
        \mathcal{C}_1 / \mathcal{C}_2^{\perp} \subseteq \mathcal{C}_4, \quad  \mathcal{C}_3 \subseteq \mathcal{C}_2, \quad \mathbf{A B}^T = \mathbf{I},
    \end{equation}
where 
$\mathbf{A}, \mathbf{B} \in \mathbb{F}_2^{k \times n}$,
$\mathbf{A}\cong\mathcal{C}_1 / \mathcal{C}_2^{\perp}$, $\mathbf{B}\cong\mathcal{C}_3 / \mathcal{C}_4^{\perp}$ are two mapping matrices, of the kind discussed after~\eqref{Eq:StablStCSS}. 

If a QECC is both $\mathbf{U}_{a}$ and $\mathbf{U}_{b}$ transversal, it is also $\mathbf{U}_{c}=\mathbf{U}_{a}\mathbf{U}_{b}$ transversal. From 
the definition of transversality~\eqref{Eq.Transversal} we have:
    \begin{align*}
        &\mathbf{U}_a^{\otimes n}\lvert\bm{\psi\rangle}_L=\mathbf{U}_{aL}^{\otimes k}\lvert\bm{\psi\rangle}_L\\
        &\mathbf{U}_b^{\otimes n}\lvert\bm{\psi\rangle}_L=\mathbf{U}_{bL}^{\otimes k}\lvert\bm{\psi\rangle}_L.
    \end{align*}
Combining these, we get 
    \begin{equation}  
    \mathbf{U}_a^{\otimes n}\mathbf{U}_b^{\otimes n}\lvert\bm{\psi\rangle}_L=\mathbf{U}_{aL}^{\otimes k}\mathbf{U}_{bL}^{\otimes k}\lvert\bm{\psi\rangle}_L,
  \label{Prop.Transversality}
\end{equation}
thus we have  $\mathbf{U}_c$ transversality.

\subsection{Triorthogonal code}

Triorthogonal codes are special type of CSS code, which are generated by triorthogonal matrices~\cite{Bravyi_2012}. 
A binary matrix $\mathbf{G}$ of size $m$-by-$n$ is triorthogonal if 
\begin{enumerate}
    \item for all pairs $(a,b)$ that satisfy $1\leq a<b\leq m$, we have 
    \begin{equation}
        \sum_{i=1}^n G_{ai}G_{bi}=0 \mod 2
        \label{Eq:Cond1}
    \end{equation}
    \item for all triples $(a,b,c)$ that satisfy $1\leq a<b<c\leq m$, we have 
    \begin{equation}
        \sum_{i=1}^n G_{ai}G_{bi}G_{ci}=0 \mod 2
    \end{equation}
\end{enumerate}
where $G_{ij}$ indicates the $j$th element in $i$th row of $\mathbf{G}$~\cite{shi2024triorthogonal}.
A triorthogonal matrix $\mathbf{G}$ can be written as:
\begin{equation}
     \mathbf{G}=\left [\begin{array}{cc}
      \mathbf{G}_1 \\
      \hline
      \mathbf{G}_0
 \end{array}\right]
 \label{Eq:TriMatrix}
 \end{equation}
 where all rows in $\mathbf{G}_1$ have odd weight, while rows in $\mathbf{G}_0$ have even weight.
 
A triorthogonal $[[n ,k, d ]]$ CSS code $\mathcal{
Q}^{\rm T} = {\rm CSS}\left( \mathcal{C}_1,  \mathcal{C}_2\right)$ is generated by triorthogonal matrix $\mathbf{G}$ of the form~\eqref{Eq:TriMatrix}, such that  $\mathbf{G}$ and $\mathbf{G}_0^{\perp}$ are the generators of $\mathcal{C}_1$ and $\mathcal{C}_2$, respectively. The generator matrix of the triorthogonal CSS code thus is:
\begin{equation}
    \mathbf{G}^{\mathcal{Q}^T}=\left [\begin{array}{cc}
        \mathbf{G}_0 & \mathbf{0}  \\
         \mathbf{0}& \mathbf{G}^\perp 
    \end{array}\right].
    \label{Eq:TriCodeGen}
\end{equation}

Transversal action of $\mathbf{T}$ gates realize logical $\mathbf{T}$-gates in the triorthogonal codes if certain additional constraints on logical $\mathbf{X}$ operators hold~\cite{narayana2020optimality}.
Triorthogonal codes that support transversal $\mathbf{T}$-gates are referred to as T-transversal triorthogonal codes. In this paper, we use the term \emph{T-triorthogonal codes} for these codes.

\section{Transversal Toffoli-gate}
\label{Sec.ToffoliTr}

In this section, we construct a hybrid-code system that has a transversal logical Toffoli-gate.  
This \emph{hybrid-code system} defined as follows: A system of $k$ parallel logical Toffoli-gates are realized across three code blocks, each protecting $k$ logical qubits. The gate acts in parallel on $k$ triplets of qubits, with one qubit  from each block in each triplet. The first and second block are encoded in T-triorthogonal code $\mathcal{Q}^{\rm T}$. 
We denote mirrored code of $\mathcal{Q}^{\rm T}$ by $\mathcal{Q}^{\rm T_X}$ and use it for the third code block. The concept of mirrored CSS code is defined in ~\cite{bayanifar2025transversality}.
Below, we discuss this concept, and show that this system can achieve transversal Toffoli-gate.

\subsection{Mirrored code $\mathcal{Q}^{\rm T_X}$}

To start with, we focus on the mirrored code $\mathcal{Q}^{\rm T_X}$, emphasizing its features that can be used for providing Toffoli-gate transversality. 

Consider T-triorthogonal code $\mathcal{Q}^{\rm T}=\rm CSS(\mathcal{C}_1,\mathcal{C}_2)$ with generator matrix~\eqref{Eq:TriCodeGen}. The corresponding mirrored code $\mathcal{Q}^{\rm T_X}=\rm CSS(\mathcal{C}_2,\mathcal{C}_1)$ has the generator matrix:
\begin{equation}
    \mathbf{G}^{\mathcal{Q}^{\rm T_X}}=\left [\begin{array}{cc}
        \mathbf{G}^\perp  & \mathbf{0}  \\
         \mathbf{0}& \mathbf{G}_0
    \end{array}\right],
    \label{Eq:MirrorTriCodeGen}
\end{equation}
which can be considered as conjugating all the generators in $\mathcal{Q}^{\rm T}$ with Hadamard operations, exchanging the $\mathbf{X}$ and $\mathbf{Z}$ parts of the generators. 

Defining the transformed gate $\mathbf{T}_{\rm X}=\mathbf{HTH}$, we have:
\begin{lemma}
    If triorthogonal code $\mathcal{Q}^{\rm T}$ has $\mathbf{T}$-gate transversality, 
    its mirrored code $\mathcal{Q}^{\rm T_X}$ has $\mathbf{T}_{\rm X}$ gate transversality. 
    \label{Lemma.TxTr}
\end{lemma}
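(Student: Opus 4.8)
The plan is to exploit the fact that the mirrored code is obtained from $\mathcal{Q}^{\rm T}$ by conjugating every stabilizer generator with a transversal Hadamard, as reflected in the exchange of the $\mathbf{X}$ and $\mathbf{Z}$ blocks between~\eqref{Eq:TriCodeGen} and~\eqref{Eq:MirrorTriCodeGen}. The first step is to make this precise at the level of states: since the stabilizer group of $\mathcal{Q}^{\rm T_X}$ is $\mathbf{H}^{\otimes n}\mathcal{S}\,\mathbf{H}^{\otimes n}$, where $\mathcal{S}$ is the stabilizer group of $\mathcal{Q}^{\rm T}$, the codespace of the mirrored code is exactly $\mathbf{H}^{\otimes n}$ applied to the codespace of $\mathcal{Q}^{\rm T}$. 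Concretely, for every logical state $\lvert\bm{\psi}\rangle_L$ of $\mathcal{Q}^{\rm T}$ the corresponding logical state of the mirrored code is $\lvert\bm{\psi}\rangle_m=\mathbf{H}^{\otimes n}\lvert\bm{\psi}\rangle_L$, and the logical operators transform by the same conjugation, $\bar{X}\mapsto\mathbf{H}^{\otimes n}\bar{X}\,\mathbf{H}^{\otimes n}$ and likewise for $\bar{Z}$, which simply swaps the logical $\mathbf{X}$ and $\mathbf{Z}$ roles.

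With this correspondence in hand, the second step is a direct substitution using the definition $\mathbf{T}_{\rm X}=\mathbf{HTH}$. Since tensor powers distribute, $\mathbf{T}_{\rm X}^{\otimes n}=\mathbf{H}^{\otimes n}\mathbf{T}^{\otimes n}\mathbf{H}^{\otimes n}$, so I would compute
\begin{equation*}
    \mathbf{T}_{\rm X}^{\otimes n}\lvert\bm{\psi}\rangle_m=\mathbf{H}^{\otimes n}\mathbf{T}^{\otimes n}\mathbf{H}^{\otimes n}\mathbf{H}^{\otimes n}\lvert\bm{\psi}\rangle_L=\mathbf{H}^{\otimes n}\mathbf{T}^{\otimes n}\lvert\bm{\psi}\rangle_L,
\end{equation*}
where I used $\mathbf{H}^2=\mathbf{I}$. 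Now invoking the assumed $\mathbf{T}$-gate transversality of $\mathcal{Q}^{\rm T}$ through~\eqref{Eq.Transversal}, I replace $\mathbf{T}^{\otimes n}\lvert\bm{\psi}\rangle_L$ by $\mathbf{T}_L^{\otimes k}\lvert\bm{\psi}\rangle_L$, obtaining $\mathbf{H}^{\otimes n}\mathbf{T}_L^{\otimes k}\lvert\bm{\psi}\rangle_L$.

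The final step is to recognize the right-hand side as the logical $\mathbf{T}_{\rm X}$ action on the mirrored code. Using the operator correspondence from the first step, the logical gate of the mirrored code is $(\mathbf{T}_{\rm X})_L=\mathbf{H}^{\otimes n}\mathbf{T}_L\,\mathbf{H}^{\otimes n}$, so that $\mathbf{H}^{\otimes n}\mathbf{T}_L^{\otimes k}\lvert\bm{\psi}\rangle_L=(\mathbf{T}_{\rm X})_L^{\otimes k}\mathbf{H}^{\otimes n}\lvert\bm{\psi}\rangle_L=(\mathbf{T}_{\rm X})_L^{\otimes k}\lvert\bm{\psi}\rangle_m$. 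Chaining the three steps gives $\mathbf{T}_{\rm X}^{\otimes n}\lvert\bm{\psi}\rangle_m=(\mathbf{T}_{\rm X})_L^{\otimes k}\lvert\bm{\psi}\rangle_m$ for every logical state, which is exactly the transversality condition~\eqref{Eq.Transversal} for $\mathbf{T}_{\rm X}$ on $\mathcal{Q}^{\rm T_X}$.

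I expect the main obstacle to be the bookkeeping in the first and last steps: I must verify that $\mathbf{H}^{\otimes n}$ genuinely carries the whole logical structure of $\mathcal{Q}^{\rm T}$ onto that of $\mathcal{Q}^{\rm T_X}$ — not only mapping codespace to codespace but also sending the chosen logical $\mathbf{X}$/$\mathbf{Z}$ representatives to valid logical representatives of the mirrored code — so that defining $(\mathbf{T}_{\rm X})_L$ by conjugation is consistent with it being the genuine logical $\mathbf{HTH}$ of the mirrored code. Because $\mathbf{H}^{\otimes n}$ maps between two distinct codespaces rather than acting within one, care is needed to interpret the conjugation $(\mathbf{T}_{\rm X})_L=\mathbf{H}^{\otimes n}\mathbf{T}_L\mathbf{H}^{\otimes n}$ as a frame change rather than an in-code logical operation; once this identification is pinned down, the algebra collapses immediately via $\mathbf{H}^2=\mathbf{I}$ and the assumed transversality.
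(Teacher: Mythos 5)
Your proposal is correct and follows essentially the same route as the paper's own (much terser) proof: both arguments rest on the observation that $\mathcal{Q}^{\rm T_X}$ is the $\mathbf{H}^{\otimes n}$-conjugate of $\mathcal{Q}^{\rm T}$ --- generators, stabilizer group, logical operators and codespace all transported by transversal Hadamards --- so that conjugating the assumed transversal $\mathbf{T}$ by $\mathbf{H}^{\otimes n}$ and using $\mathbf{H}^2=\mathbf{I}$ yields transversal $\mathbf{T}_{\rm X}=\mathbf{HTH}$ on the mirrored code. Your version merely makes explicit the state-level computation and the frame-change bookkeeping (including the caveat that $(\mathbf{T}_{\rm X})_L=\mathbf{H}^{\otimes n}\mathbf{T}_L\mathbf{H}^{\otimes n}$ must be read relative to the transported logical basis) that the paper's three-sentence proof leaves implicit.
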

\begin{proof}
First, by applying $n$ parallel physical Hadamard gates on the generator \eqref{Eq:TriCodeGen} of 
$\mathcal{Q}^{\rm T}$, we get the generator \eqref{Eq:MirrorTriCodeGen} of the mirrored structure code $\mathcal{Q}^{\rm T_X}$, and vice versa. 
Thus, their logical operators and stabilizer groups also can be converted into each other by using $n$ parallel physical Hadamard gates.
Since $\mathcal{Q}^{\rm T}$ has $\mathbf{T}$-gate transversality, which means $n$ parallel physical $\mathbf{T}$-gates achieve $k$ parallel logical $\mathbf{T}$-gates, the mirrored code $\mathcal{Q}^{\rm T_X}$ has $\mathbf{T}_{\rm X}$-gate transversality.
\end{proof}

\subsection{CZ and CNOT Transversality Between $\mathcal{Q}^{\rm T}$ and $\mathcal{Q}^{\rm T_X}$}

As a consequence of the concept of transversal gates between different codes~\cite{bayanifar2025transversality}, we can show that there exists 
CZ and one-direction CNOT gate transversality between the codes 
$\mathcal{Q}^{\rm T}$ and $\mathcal{Q}^{\rm T_X}$.

From~\cite{bayanifar2025transversality},
   it follows that for a pair of codes with mirrored structure, they satisfy the CZ transversal conditions~\eqref{Eq:CZsuffcond1}.
Thus, $\mathcal{Q}^{\rm T}$ and its mirrored code $\mathcal{Q}^{\rm T_X}$ fulfill this condition, they have CZ transversality. 

\begin{lemma}
\dw{Change the order.}
If  $\mathcal{Q}^{\rm T}$ is triorthogonal code and $\mathcal{Q}^{\rm T_X}$ is its mirrored code, the CNOT gate, controlled by $\mathcal{Q}^{\rm T}$ and $\mathcal{Q}^{\rm T_X}$ as target, is transversal. 
    \label{Lemma.CNOTTr}
\end{lemma}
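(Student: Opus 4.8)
The plan is to reduce the claim to verifying the two transversal-CNOT conditions of~\eqref{Eq:CNOTCondition} for the pair at hand. I take the control code to be $\mathcal{Q}^{\rm T}=\mathrm{CSS}(\mathcal{C}_1,\mathcal{C}_2)$ and the target code to be $\mathcal{Q}^{\rm T_X}=\mathrm{CSS}(\mathcal{C}_2,\mathcal{C}_1)$, so that in~\eqref{Eq:CNOTCondition} one substitutes $(\mathcal{C}_3,\mathcal{C}_4)=(\mathcal{C}_2,\mathcal{C}_1)$. The two requirements then become (i)~$\mathcal{C}_1/\mathcal{C}_2^\perp\cong\mathcal{C}_2/\mathcal{C}_1^\perp$ and (ii)~$\mathcal{C}_2^\perp\subseteq\mathcal{C}_1^\perp$. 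Thus the entire proof amounts to checking (i) and (ii) from the triorthogonal structure~\eqref{Eq:TriMatrix}--\eqref{Eq:TriCodeGen}, after which the cited criterion closes the argument.

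For the isomorphism (i), I would first read off from~\eqref{Eq:TriCodeGen} that $\mathcal{C}_1=\langle\mathbf{G}\rangle$, $\mathcal{C}_2^\perp=\langle\mathbf{G}_0\rangle$, hence $\mathcal{C}_1^\perp=\langle\mathbf{G}\rangle^\perp$ and $\mathcal{C}_2=\langle\mathbf{G}_0\rangle^\perp$. A dimension count gives $\dim(\mathcal{C}_1/\mathcal{C}_2^\perp)=\dim(\mathcal{C}_2/\mathcal{C}_1^\perp)=k$, so the two quotients are isomorphic as $\mathbb{F}_2$-spaces. To ensure this is the isomorphism actually realized by the physical gate, I would exhibit a common mapping matrix: I claim the odd-weight block $\mathbf{G}_1$ generates both quotients. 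The rows of $\mathbf{G}_1$ lie in $\mathcal{C}_1$ trivially and in $\mathcal{C}_2$ because each is orthogonal to every row of $\mathbf{G}_0$ by~\eqref{Eq:Cond1}; their independence modulo $\mathcal{C}_2^\perp$ is immediate from $\mathbf{G}$ having full rank, while their independence modulo $\mathcal{C}_1^\perp$ follows from $\mathbf{G}_1\mathbf{G}_1^T=\mathbf{I}$, whose off-diagonal entries vanish by~\eqref{Eq:Cond1} and whose diagonal entries equal the odd row weights modulo $2$. Hence $\mathbf{A}=\mathbf{B}=\mathbf{G}_1$ generates $\mathcal{C}_1/\mathcal{C}_2^\perp$ and $\mathcal{C}_2/\mathcal{C}_1^\perp$ simultaneously, matching the logical qubits of the two blocks.

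The containment (ii) is where the triorthogonal hypothesis really enters, and I expect it to be the main (if short) technical step. Condition (ii) is equivalent to $\mathbf{G}_0\mathbf{G}^T=\mathbf{0}$, i.e.\ every row of $\mathbf{G}_0$ is orthogonal to every row of $\mathbf{G}$. For two distinct rows this is precisely the pairwise orthogonality~\eqref{Eq:Cond1}; for a row of $\mathbf{G}_0$ against itself the inner product equals its weight modulo $2$, which vanishes because rows of $\mathbf{G}_0$ have even weight. Therefore $\mathcal{C}_2^\perp=\langle\mathbf{G}_0\rangle\subseteq\langle\mathbf{G}\rangle^\perp=\mathcal{C}_1^\perp$. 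Note that only~\eqref{Eq:Cond1} and the weight parity of the two row blocks are used; the triple-product condition needed for $\mathbf{T}$-transversality plays no role here.

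Having verified (i) and (ii), I would invoke~\eqref{Eq:CNOTCondition} to conclude that the logical CNOT controlled by $\mathcal{Q}^{\rm T}$ with $\mathcal{Q}^{\rm T_X}$ as target is transversal. As a sanity check on the asymmetry noted in the text, the reverse direction would instead demand $\mathcal{C}_1^\perp\subseteq\mathcal{C}_2^\perp$, i.e.\ $\mathcal{C}_2\subseteq\mathcal{C}_1$, which fails whenever $k>0$; this is consistent with only one CNOT direction being transversal. An alternative, more self-contained route would be to propagate the joint stabilizer generators through $n$ physical CNOTs and verify directly that the $X$-type generators of the control map to $X$-stabilizers of the target and the $Z$-type generators of the target map to $Z$-stabilizers of the control, both reducing once more to $\mathcal{C}_2^\perp\subseteq\mathcal{C}_1^\perp$; but the condition-checking route above is the shorter one.
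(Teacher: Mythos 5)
Your proposal is correct and takes essentially the same route as the paper's own proof: you verify the two conditions of \eqref{Eq:CNOTCondition} for the pair $\mathrm{CSS}(\mathcal{C}_1,\mathcal{C}_2)$, $\mathrm{CSS}(\mathcal{C}_2,\mathcal{C}_1)$, deriving $\mathcal{C}_2^\perp\subseteq\mathcal{C}_1^\perp$ from the pairwise orthogonality \eqref{Eq:Cond1} together with the even row weight of $\mathbf{G}_0$, and identifying $\mathbf{G}_1$ as a common generator of both quotients $\mathcal{C}_1/\mathcal{C}_2^\perp$ and $\mathcal{C}_2/\mathcal{C}_1^\perp$, exactly as the paper does. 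The only (harmless) difference is in bookkeeping: you certify independence of the rows of $\mathbf{G}_1$ modulo $\mathcal{C}_1^\perp$ via $\mathbf{G}_1\mathbf{G}_1^T=\mathbf{I}$, whereas the paper counts the sizes of the spaces generated by $\mathbf{G}_0^\perp$ and $\mathbf{G}^\perp$ against the coset size $2^{\mathrm{rank}(\mathbf{G}_1)}$; both arguments are valid, and your explicit treatment of the diagonal terms (even weight of $\mathbf{G}_0$ rows, odd weight of $\mathbf{G}_1$ rows) is in fact slightly more careful than the paper's appeal to \eqref{Eq:Cond1} alone.
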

\begin{proof}
    Consider the transversal CNOT gate conditions~\eqref{Eq:CNOTCondition} for the codes $\mathcal{Q}^{\rm T}=\rm CSS(\mathcal{C}_1,\mathcal{C}_2)$ and $\mathcal{Q}^{\rm T_X}=\rm CSS(\mathcal{C}_2,\mathcal{C}_1)$. 
    Since $\mathcal{Q}^{\rm T}$ a triorthogonal code with generator~\eqref{Eq:TriCodeGen}, the classical codes $\mathcal{C}_1$ and $\mathcal{C}_2$ are generated by $\mathbf{G}$ and $\mathbf{G}_0^\perp$, respectively. 
    As $\mathbf{G}$ can be written as~\eqref{Eq:TriMatrix},
\bc{and $\mathbf{G}_0$ is the generator of $\mathcal{C}_2^\perp$.
Thus, 
we have $\mathcal{C}_1/\mathcal{C}_2^\perp\cong\mathbf{G}_1$.}

\bc{A similar analysis applies to the coset $\mathcal{C}_2/\mathcal{C}_1^\perp$, which corresponds to codewords in $\mathbf{G}_0^\perp$ that are not in $\mathbf{G}^\perp$. 
By the triorthogonality condition in~\eqref{Eq:Cond1},
 $\mathbf{G}_0$ is a self-orthogonal matrix, and that each row of $\mathbf{G}_1$ is in $\mathbf{G}_0^\perp$, and $\mathbf{G}_1$ is orthogonal to $\mathbf{G}_0$.
All $2^k$ codewords generated by $\mathbf{G}_1$ are in the coset space $\mathcal{C}_2/\mathcal{C}_1^\perp$. Note that the size of the space generated by  $\mathbf{G}_0^\perp$ and $\mathbf{G}^\perp$ are $2^{n-\rm rank({\mathbf{G}_0})}$ and $2^{n-\rm rank(\mathbf{G}_0)-\rm rank(\mathbf{G}_1)}$, respectively. In addition, the coset $\mathcal{C}_2/\mathcal{C}_1^\perp$ has size $2^{\rm rank(\mathbf{G}_1)}$. This shows that it is exactly the space generated by $\mathbf{G}_1$. }
Thus, we can conclude:
$$\mathcal{C}_2^\perp\subseteq\mathcal{C}_1^\perp \,\,\, \rm and\,\,\,\, \mathcal{C}_1/\mathcal{C}_2^\perp\cong\mathcal{C}_2/\mathcal{C}_1^\perp=\mathbf{G}_1.$$
This mirrored design thus fulfills the CNOT-transversality conditions
\eqref{Eq:CNOTCondition}, with $\mathcal{Q}^{\rm T}$ as control and $\mathcal{Q}^{\rm T_X}$ as target.
\end{proof} 
\bc{Note that, this lemma holds due to the triorthogonality conditions and therefore applies to all triorthogonal codes, regardless of whether they are T-transversal.}

\subsection{Toffoli-gate decomposition}

\begin{figure}
    \centering
    \includegraphics[width=0.6\linewidth]{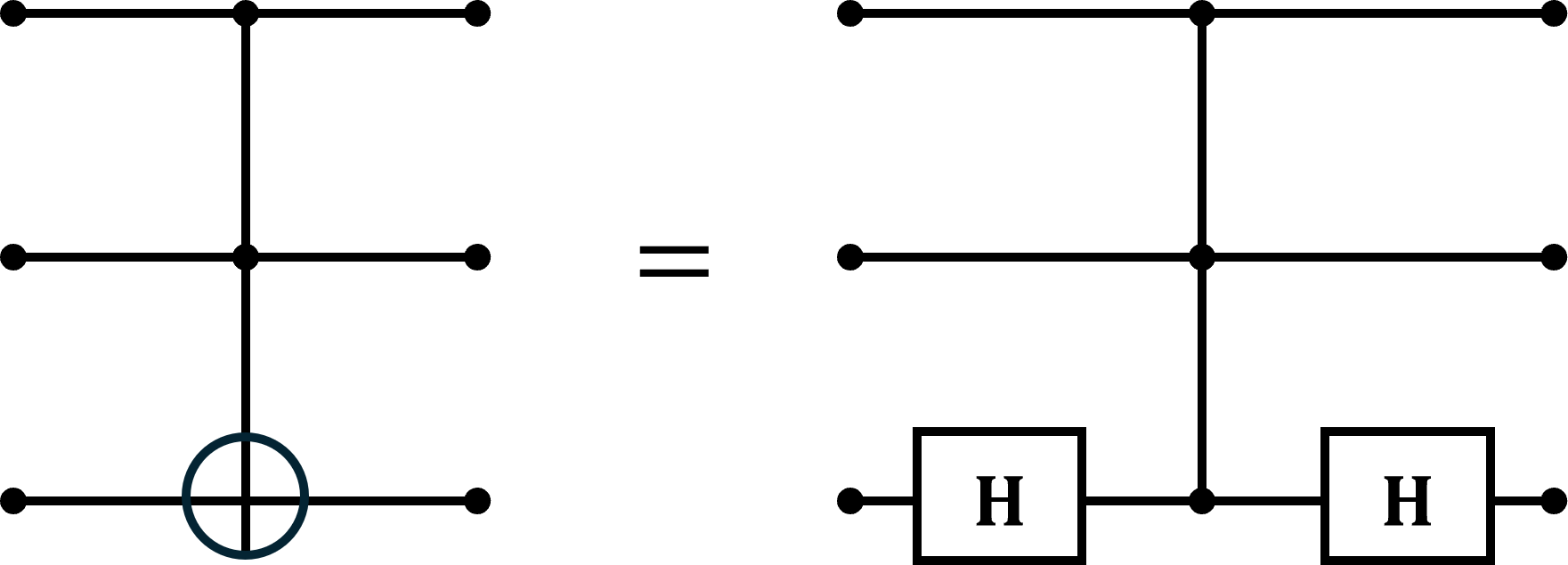}
    \caption{A Toffoli-gate is equivalent to a CCZ gate with target qubit conjugated by Hadamard gates. }
    \label{fig:Toffoli&CCZ}
\end{figure}

A logical Toffoli-gate can be realized by using a logical CCZ gate and conjugating the target qubit with logical Hadamard gates, as depicted in  Fig. \ref{fig:Toffoli&CCZ}. 
The CCZ gate itself can be decomposed into a sequence of $\mathbf{T}$-gates and CNOT gates, leading to a full decomposition of the logical Toffoli-gate into $\mathbf{T}$-gates and logical Clifford gates, shown in Fig. \ref{fig:Toffoli}~\cite{nielsen2002Book}. If T-triorthogonal codes are used, only the CNOT and $\mathbf{T}$-gates can be implemented transversally. The logical Hadamard gates cannot be realized in a transversal manner.
\begin{figure}
    \centering
    \includegraphics[width=0.85\linewidth]{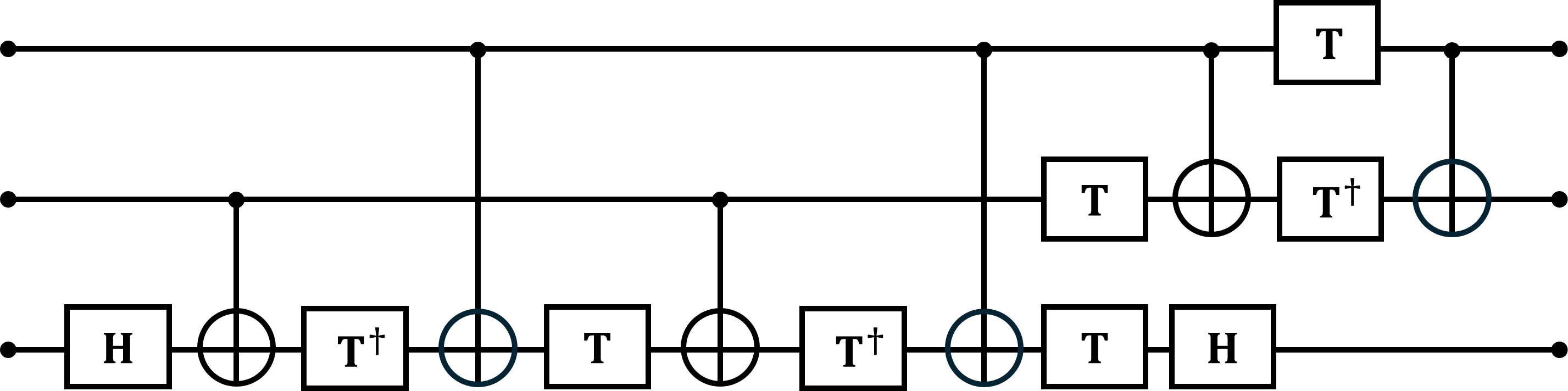}
    \caption{Decomposition of a Toffoli-gate into 6 CNOT gates, 2 Hadamard gates and 7 $\mathbf{T}$ (or $\mathbf{T}^\dagger$) gates.}
    \label{fig:Toffoli}
\end{figure}


\bc{By commuting the first Hadamard gate in Fig.\ref{fig:Toffoli} across all gates acting on the third qubit, each $\mathbf{T}$ gate is transformed into a $\mathbf{T}_{\rm X}$ gate, and CNOT gates are converted into CZ gates. This Hadamard gate then  cancels with the last one, yielding a new Toffoli gate decomposition shown in Fig.\ref{Fig. TxToffoli}. Based on this circuit, we can implement the logical Toffoli gate by encoding the first and second qubits in $\mathcal{Q}^{\rm T}$, and the third qubit in $\mathcal{Q}^{\rm T_X}$. With this encoding, all logical gates in the circuit can be implemented transversally. Therefore, we have:}

\begin{figure}
    \centering
    \includegraphics[width=0.9\linewidth]{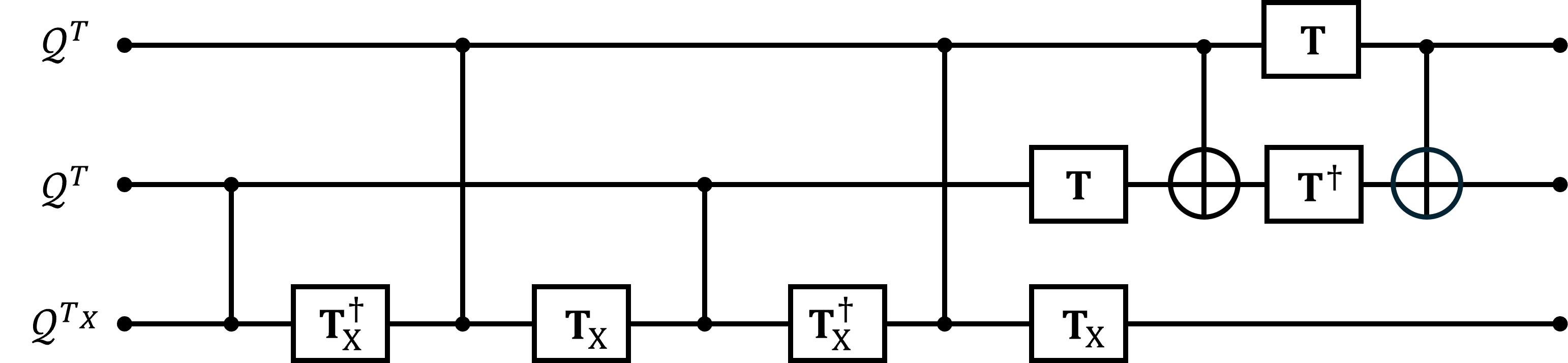}
    \caption{Modified Toffoli-gate decomposition circuit. The first and second qubits encoded in $\mathcal{Q}^{T}$ and the last one encoded in $\mathcal{Q}^{T_X}$.  }
    \label{Fig. TxToffoli}
\end{figure}

\begin{theorem}
    A hybrid-code system with the first and second qubit encoded in $\mathcal{Q}^{\rm T}$ and third qubit encoded in $\mathcal{Q}^{\rm T_X}$
   has Toffoli-gate transversality with the first and second qubit as control qubits and third as target qubit.
\end{theorem}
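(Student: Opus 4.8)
The plan is to prove the theorem directly from the modified decomposition of Fig.~\ref{Fig. TxToffoli}, which expresses the logical Toffoli-gate as an ordered sequence of four types of logical gate: CNOT gates acting within $\mathcal{Q}^{\rm T}$ (between the two control blocks), CZ gates acting between $\mathcal{Q}^{\rm T}$ and $\mathcal{Q}^{\rm T_X}$, $\mathbf{T}$ and $\mathbf{T}^\dagger$ gates on $\mathcal{Q}^{\rm T}$, and $\mathbf{T}_{\rm X}$ and $\mathbf{T}_{\rm X}^\dagger$ gates on $\mathcal{Q}^{\rm T_X}$. The first thing I would fix is that this decomposition is a genuine circuit identity at the physical level, so that applying the physical gates of the decomposition across the three blocks position-by-position is literally the same unitary as applying $\mathbf{Toffoli}^{\otimes n}$. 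This reduces the target equality $\mathbf{Toffoli}^{\otimes n} = \mathbf{Toffoli}_L^{\otimes k}$ to showing that the transversally-applied physical decomposition equals the logical decomposition.

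Next I would certify that each constituent logical gate in Fig.~\ref{Fig. TxToffoli} is individually transversal. The CNOT gates between the two $\mathcal{Q}^{\rm T}$ blocks are transversal because every CSS code supports transversal CNOT. The CZ gates between $\mathcal{Q}^{\rm T}$ and $\mathcal{Q}^{\rm T_X}$ are transversal because the mirrored pair satisfies the CZ conditions~\eqref{Eq:CZsuffcond1}, as established above. The $\mathbf{T}$ gates on $\mathcal{Q}^{\rm T}$ are transversal by the T-triorthogonality assumption, and the $\mathbf{T}^\dagger$ gates are then transversal via $\mathbf{T}^\dagger=\mathbf{T}^{7}$ together with the composition rule~\eqref{Prop.Transversality}. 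Finally, the $\mathbf{T}_{\rm X}$ gates on $\mathcal{Q}^{\rm T_X}$ are transversal by Lemma~\ref{Lemma.TxTr}, and $\mathbf{T}_{\rm X}^\dagger=\mathbf{T}_{\rm X}^{7}$ is transversal by the same composition argument.

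The final step invokes the composition property of transversal gates~\eqref{Prop.Transversality}: a product of transversal operations is again transversal, and the logical action of the product is the product of the logical actions. Applying this layer by layer to the ordered sequence in Fig.~\ref{Fig. TxToffoli}, the transversally-applied physical circuit realizes exactly the logical circuit, which by the decomposition identity is $\mathbf{Toffoli}_L^{\otimes k}$. Combined with the first step, this yields $\mathbf{Toffoli}^{\otimes n}=\mathbf{Toffoli}_L^{\otimes k}$ on the hybrid-code system, which is precisely Toffoli-gate transversality in the sense of~\eqref{Eq.Transversal}.

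The hard part will be the bookkeeping of the composition argument across the three distinct code blocks. Property~\eqref{Prop.Transversality} is stated only for gates acting on a single code, so I would need to extend it to operations that act within one block, between two blocks, and across the mixed encoding simultaneously, verifying that the logical-to-physical correspondence is preserved at each layer and that the inter-block CZ and CNOT transversalities compose consistently with the intra-block $\mathbf{T}$-type transversalities. I would also want to double-check that the Hadamard commutation used to pass from Fig.~\ref{fig:Toffoli} to Fig.~\ref{Fig. TxToffoli} sends each CNOT-into-target to a CZ and each $\mathbf{T}$ on the target to a $\mathbf{T}_{\rm X}$ \emph{exactly}, so that the gate inventory matches the transversalities certified above with no leftover Hadamard acting on $\mathcal{Q}^{\rm T_X}$.
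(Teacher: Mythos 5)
Your proposal is correct and follows essentially the same route as the paper's proof: both use the modified decomposition of Fig.~\ref{Fig. TxToffoli}, certify each constituent gate transversal (CNOT from the CSS property, $\mathbf{T}$ from T-triorthogonality, $\mathbf{T}_{\rm X}$ from Lemma~\ref{Lemma.TxTr}, CZ from the mirrored-pair conditions), and conclude via the composition property~\eqref{Prop.Transversality}. Your extra care about $\mathbf{T}^\dagger=\mathbf{T}^7$, the exact Hadamard-commutation identity, and extending~\eqref{Prop.Transversality} to gates acting across distinct code blocks makes explicit details the paper leaves implicit, but it is the same argument.
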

\begin{proof}
    Since $\mathcal{Q}^{\rm T}$ is a T-triorthogonal CSS code, the first and second qubits, encoded in $\mathcal{Q}^{\rm T}$, have $\mathbf{T}$-gate and CNOT gate transversalities. 
    From Lemma \ref{Lemma.TxTr}, the third qubit, encoded in $\mathcal{Q}^{\rm T_X}$, supports transversal $\mathbf{T}_{\rm X}$. From \cite[Example.2]{bayanifar2025transversality}, it follows that CZ gates are transversal between qubits 1 or 2 and 3.  
    In the Toffoli-gate decomposition circuit shown in Fig. \ref{Fig. TxToffoli}, every operation can thus be applied transversally across three code blocks. Since each component of the circuit support the transversal conditions, from \eqref{Prop.Transversality} it follows that the logical Toffoli-gate can be directly implemented in a transversal manner in this system.
\end{proof}
\dw{Two ways of using it to have universal computation.}

With such a hybrid-code system, two approaches can be used to achieve universal quantum computation. The first approach uses the T-triorthogonal code with its mirrored code. 
In this case, the system implements non-Clifford gate transversality, and universality can be achieved by applying the logical Hadamard gate through an ancilla block~\cite{paetznick2013universal}.    
The second approach exploits the Toffoli gate transversality to perform Toffoli state distillation. Similar to magic state distillation, the distilled Toffoli states serve as ancilla states for implementing logical Toffoli gates in other code systems with transversal Clifford gate sets, such as surface codes or color codes.
We will explain this distillation protocol in the following section.

\section{Toffoli-state distillation}
\label{Sec.ToffoliDis}
As an example application for the proposed hybrid-code system with Toffoli gate transversality,  
we consider a Toffoli state distillation protocol. 
The Toffoli state $\lvert \rm \psi_{Tof}\rangle$ is a three qubit magic state defined by the output of the Toffoli-gate acting on the initial state: 
\dw{define CCX.}
\begin{equation}
    \lvert \rm \psi_{Tof}\rangle=\mathbf{CCX}\lvert +\rangle\lvert +\rangle\lvert 0\rangle,
\end{equation} 
where $\mathbf{CCX}$ indicates Toffoli gate and with the third qubit $\lvert0\rangle$ as the target.
It can be used to implement the action of a Toffoli-gate with a circuit shown as Fig.\ref{fig:ToffoliState}, as an assistant state. 
The lower three qubits in arbitrary state $\lvert a\rangle\lvert b\rangle\lvert c\rangle$ are the input state.
With following three CNOT gates between Toffoli state and input state, the input state are measured in Pauli $\mathbf{X}$ and $\mathbf{Z}$ basis. According to the measurement outcome, the feedback operations, connected by solid lines, are applied. For example, if the measurement on $\lvert c\rangle$ is $-1$, then a Pauli $\mathbf{Z}$ and CZ operators are applied simultaneously. The outcome of left circuit is $\lvert a\rangle\lvert b\rangle\lvert a\cdot b\oplus c\rangle$, which is the same as the outcome of $\lvert a\rangle\lvert b\rangle\lvert c\rangle$ go through Toffoli-gate.

Our Toffoli state distillation protocol is based on the hybrid-code system, which takes low-fidelity Toffoli states as input and outputs distilled, high-fidelity physical Toffoli states.
These high-fidelity states can then be fault-tolerantly encoded into any code used within the computational system. In cases where direct fault-tolerant encoding is not available, a concatenated architecture can be employed, combining the computational code with the hybrid-code system.
There are several ways to fault-tolerantly distill Toffoli-states~\cite{haah2018Toffoli,jones2013Toffoli,eastin2013distilling}. These protocols use distilled magic states to implement logical $\mathbf{T}$-gates, and through specific logical circuits, high-fidelity Toffoli states can be obtained.
Our distillation method exploits Toffoli-gate transversality discussed above, and directly distill the Toffoli state for achieving high-fidelity state.
The distillation protocol can be summarized as follows:

\begin{figure}
    \centering
    \includegraphics[width=0.95\linewidth]{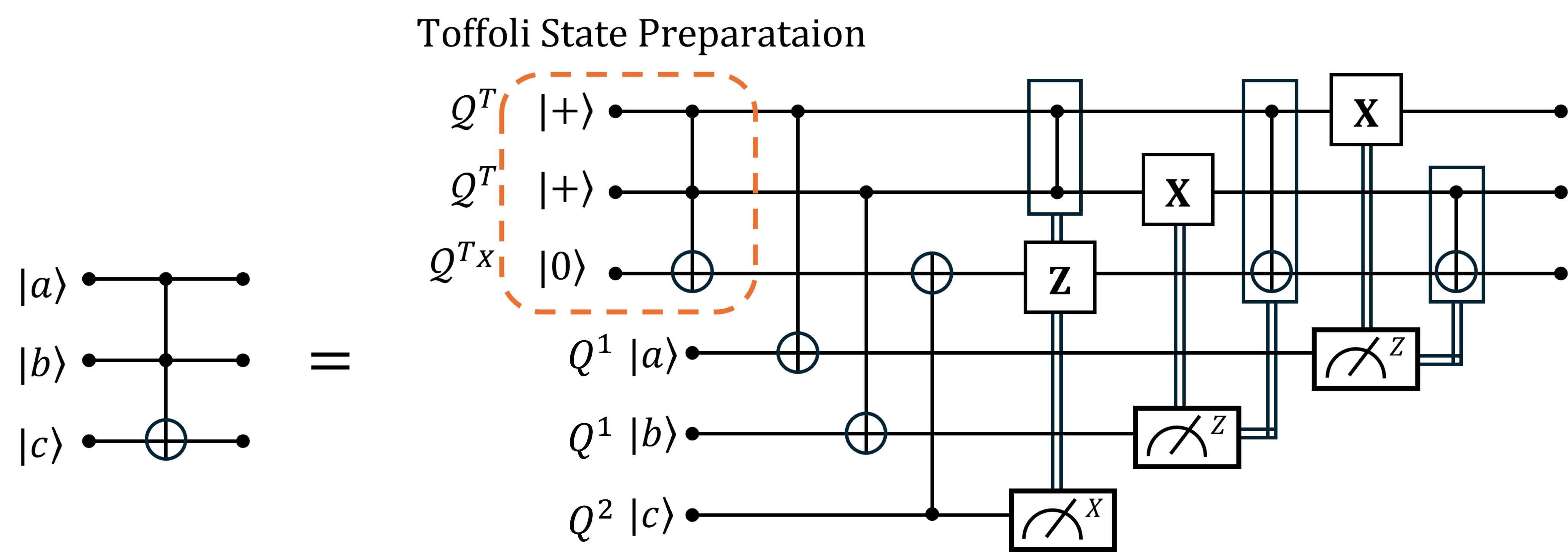}
    \caption{
    Circuit for both Toffoli state distillation and Toffoli gate implementation. When the Toffoli state is used to perform a Toffoli gate, $\mathcal{Q}^1$ and $\mathcal{Q}^2$ correspond to the same code employed in the computational system, such as the surface code. In the Toffoli-state distillation process, the $\mathcal{Q}^1$ and $\mathcal{Q}^2$ are chosen as $\mathcal{Q}^{\rm T}$ and $\mathcal{Q}^{\rm T_x}$ respectively. The orange dashed box outputs an ancillary Toffoli-state. Using CNOTs and measurements, the Toffoli-gate action is teleported to the 3-qubit state  $\lvert a\rangle\lvert b\rangle\lvert c\rangle$. }
    \label{fig:ToffoliState}
\end{figure}

\dw{Using algorithm manner.}
\begin{itemize}
    \item The first layer distillation:
    \begin{itemize}
        \item Three code blocks are encoded in logical $\lvert+\rangle_L\lvert+\rangle_L\lvert0\rangle_L$ with hybrid-code system using $\mathcal{Q}^{\rm T}$, $\mathcal{Q}^{\rm T}$ and $\mathcal{Q}^{\rm T_X}$. 
        \item Transversal Toffoli gates are applied on $n$ physical qubits between three code blocks, and $\mathcal{Q}^{\rm T_X}$ as target. If physical Toffoli gate is not available in quantum hardware, one can use circuit shown in Fig.\ref{Fig. TxToffoli}, with all the element gates can be transversally applied.
        \item Syndrome measurements are applied after transversal Toffoli gate (or after each element gates in Fig.\ref{Fig. TxToffoli}). If non-trivial syndromes are detected, discard the state and start from beginning.
        \item The outcome state is $\lvert \rm \psi_{Tof}\rangle_L^{\otimes k}$, then we decode the logical state and get $k$ distilled physical Toffoli states.
    \end{itemize}
    \item The second and higher layer distillation:
    \begin{itemize}
        \item We prepare three code blocks encoded in logical state  $\lvert+\rangle_L\lvert+\rangle_L\lvert0\rangle_L$ with hybrid-code system.
        \item Prepare $n$ physical distilled Toffoli states from the last layer distillation.
        \item We apply $n$ transversal Toffoli gates between physical qubits in three code blocks. Each transversal Toffoli gate can be realized via circuit shown as Fig.\ref{fig:ToffoliState}.
        \begin{itemize}
            \item First conduct $3n$ transversal CNOT gates between the first layer distilled states and the second layer input states.
        \item The input states are measured in Pauli $\mathbf{X}$ and $\mathbf{Z}$ basis. The feedback transversal operations are applied accordingly. 
        \item Syndrome measurement are applied to detect the errors in these procedures. If outcome is non-trivial syndrome, discard the state and start from the beginning.
        \end{itemize} 
        \item Decoding the result states, we get $k$ next-layer distilled Toffoli states.    
    \end{itemize}
\end{itemize}

Notice, after repeated use of distilled Toffoli-states, we can achieve Toffoli-gates with arbitrarily high fidelity, as long as the physical error rate of initial physical Toffoli-gates is lower than a threshold.

Our Toffoli state distillation protocol can produce a high-fidelity Toffoli state without the need to generate seven magic states that are used to realize the circuit of Fig.~\ref{fig:Toffoli}. This not only reduces the overall resource overhead but also minimizes the number of encoding procedures, as each distilled physical magic state would need to be re-encoded into a different code system.

\section{Numerical Simulation}\label{Sec:numerRes}
As an example, consider using the $[[15,1,3]]$ T-triorthogonal code and its mirrored code for Toffoli-state distillation.  For performance evaluation, we assume noise on the physical Toffoli-gates as the only error source.
\bc{For each code block, the main contribution to non-trivial syndromes are weight-1 and weight-2 errors. For simplicity we assume that after apply physical Toffoli gate, all qubits have the same error probability $p$ to occur a $\mathbf{Z}$ error, the successful rate $p_s$ to conduct distillation can be written as:
\begin{equation*}
    p_s=1-15p-105p^2+O(p^3).
\end{equation*}
When we get trivial syndrome, the outcome state in each code block has error rate $35p^3+O(p^4)$, which the factor 35 comes from the number of non-detectable weight-3 errors in $[[15,1,3]]$ code. This performance is the same as using $[[15,1,3]]$ code for magic state distillation, thus the threshold for $p$ is 14.1\%, the same as 15-to-1 distillation protocol~\cite{bravyi2005magicstate}. Since Toffoli gate is a three qubit gate, and for each qubit it has error threshold $p_\epsilon=14.1\%$, the threshold for Toffoli error rate should be $p_{th}\approx3p_\epsilon(1-p_\epsilon)^2=31.8\%$.
}

\bc{An alternative approach is using another set of T-triorthogonal codes with parameter $[[3k+8,k,2]]$~\cite{Bravyi_2012}. By using this code, one can achieve successful error rate as: $p_s=1-(3k+8)p+O(p^2)$ and output error rate on any one qubit is $(1+3k)p^2$.}

\begin{figure}
    \centering
    \includegraphics[width=0.9\linewidth]{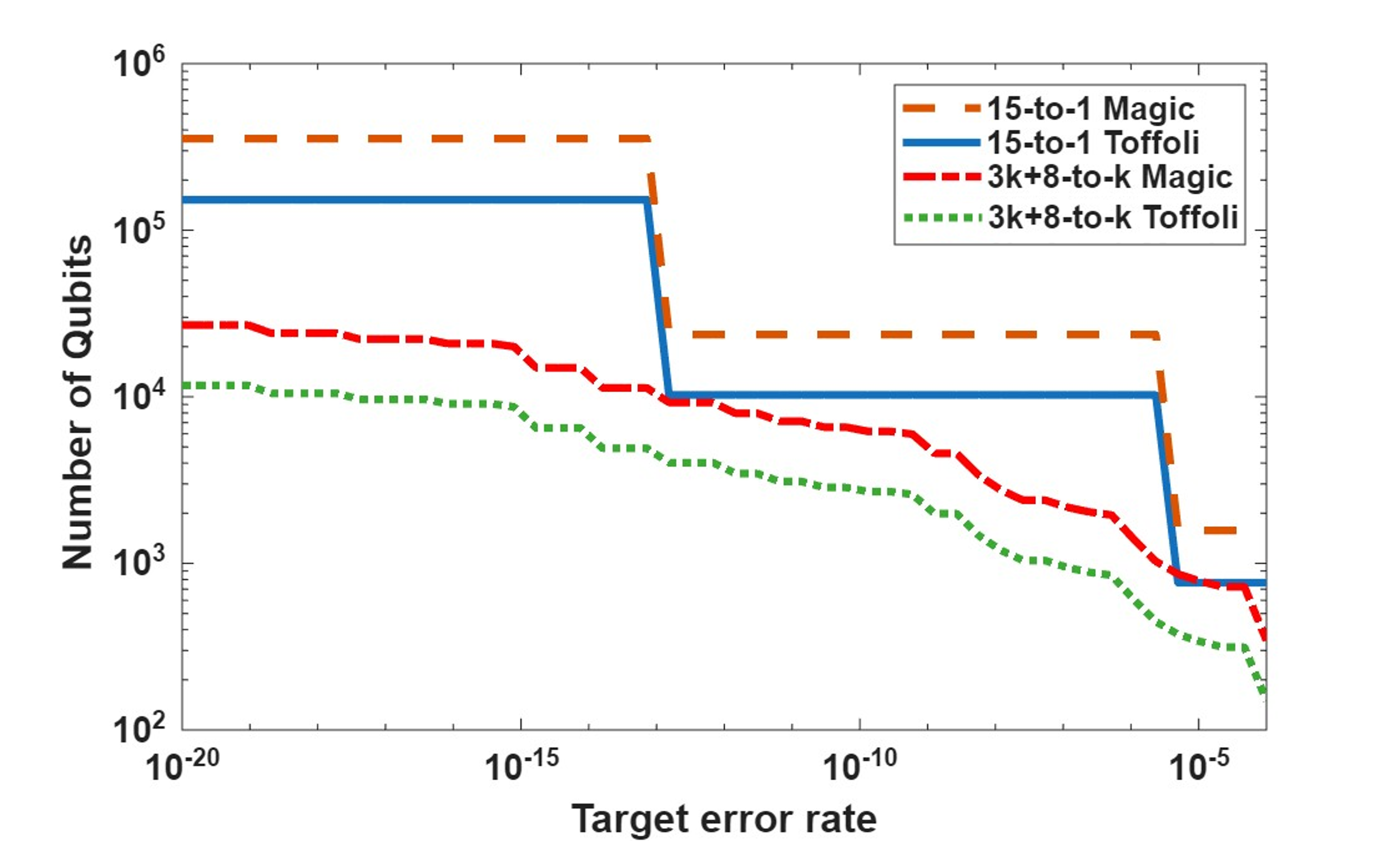}
    \caption{Comparison of qubit resource costs for different Toffoli state distillation protocols. The 15-to-1 and $3k+8$-to-$k$ Toffoli curves corresponding to our direct Toffoli state distillation protocols using the $[[15,1,3]]$ and $[[3k+8,k,2]]$ codes, respectively. The corresponding 'Magic' curves represent the protocols that first distill seven magic states using the same codes to construct a Toffoli state. With the initial physical error rate selected as $10^{-2}$, our direct Toffoli-state distillation approach achieves the same target error rate with approximately $50\%$ fewer qubits. 
    }
    \label{fig.ToffoliCompare}
\end{figure}

We can also estimate the qubit resource cost required to achieve a given level of distilled error rate. In Fig.~\ref{fig.ToffoliCompare}, we compare the qubit overhead required to achieve a target-fidelity Toffoli gate between our direct Toffoli-state distillation protocol and conventional approaches that use magic states to construct Toffoli states. The physical Toffoli gate or magic state error rate is selected as $10^{-2}$, and for $[[3k+8,k,2]]$ code we choose the optimized $k$ for each target error rate.
In our protocol, we directly distill a three-qubit Toffoli state, whereas other approaches use either seven magic states with a 15-to-1 protocol or the more general $3k+8$-to-$k$ distillation protocol.
Because we directly distill the Toffoli state, our method requires only about half as many qubits. Specifically, the average qubit cost for achieving an $m-$level distilled Toffoli state via the 15-to-1 protocol is $$\frac{3\times 15^m+45(m-1)}{p_{suc}},$$ where $p_{suc}$ is the overall successful probability.  In contrast, using magic state distillation requires $7\times15^m/p_{suc}$, which the cost is around $\times 2$ larger than our protocol. 
We can see the same reduce overhead in $[[3k+8,k,2]]$ distillation protocol. Compare with $[[15,1,3]]$ code, $[[3k+8,k,2]]$ with optimized $k$ uses less qubit to achieve target error rate. 

The optimized value of $k$ in the $[[3k+8,k,2]]$ code can be carefully selected to achieve the best performance for a given target error rate. We numerical calculate the qubit number cost for $k=1$ to $50$ with initial physical error rate selected as $10^{-2}$, and the results for $k=2$ to $k=13$ are shown in Fig.~\ref{fig.3k+8}. Other values of $k$ lead to substantially higher overhead within the considered target error-rate range, and therefore omitted. The logical qubit number $k$ at each distillation layer can thus be optimized to minimize the overall qubit overhead.

In addition to reducing the overall resource overhead, our protocol avoids the repeated encoding steps required in other methods.
Since conventional approaches rely on  $\mathbf{T}$-gate-based magic states to construct the Toffoli state, the distilled magic states must be encoded into the computational system for seven times. These encoding procedure may increase the total cost and the logical error rate.
In contrast, our protocol requires only a single encoding step at the end of distillation procedure, making it potentially more robust against CNOT gate errors. A more detailed investigation of this aspect is left for future work.

\begin{figure}
    \centering
    \includegraphics[width=0.95\linewidth]{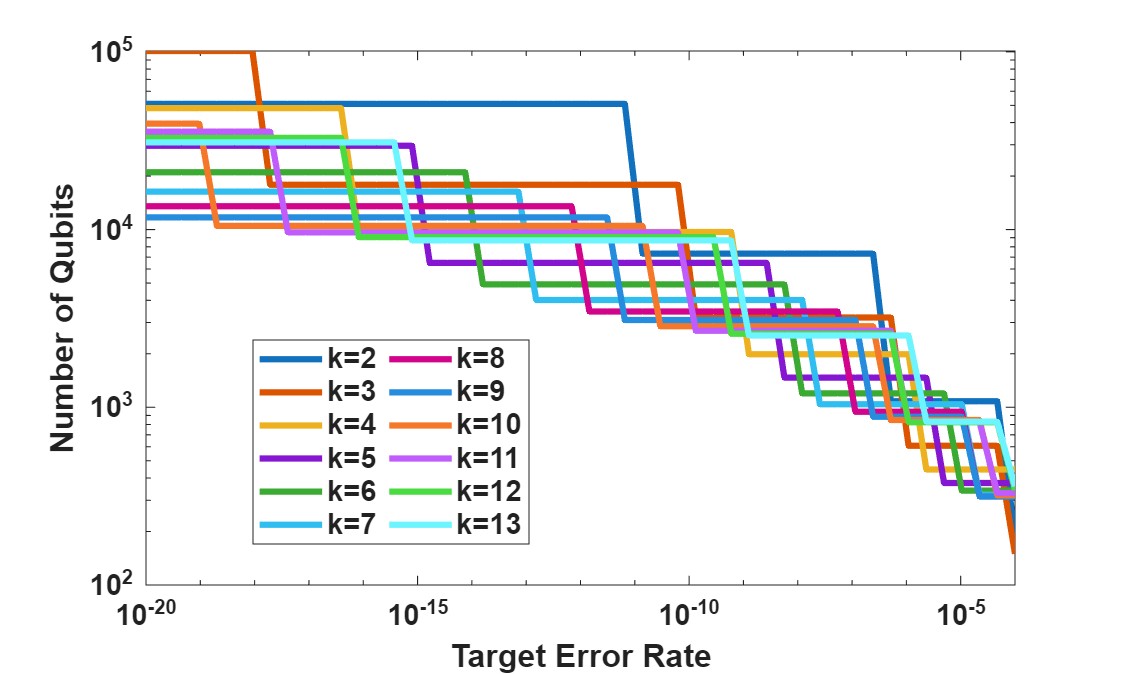}
    \caption{Different values of $k$ in the $[[3k+8,k,2]]$ code. The optimal $k$ is chosen as the value that minimizes the number of qubits required to achieve the target error rate.}
    \label{fig.3k+8}
\end{figure}

\section{Conclusion}
In this paper, we investigated the transversality of the Toffoli gate within a hybrid-code framework. Through circuit-level analysis, we demonstrated that by employing a triorthogonal code and its paired mirrored code in a specific configuration, the Toffoli gate can be implemented transversally. Furthermore, based on this hybrid-code architecture, we proposed a Toffoli gate distillation protocol that does not require pre-distilled $\mathbf{T}$ gates. The numerical results indicate that the proposed method achieves a $50\%$ reduction in the required qubit resources compared to existing approaches in the literature. This design serves as a concrete example of achieving a transversal Toffoli gate. In future work, we aim to establish necessary and/or sufficient conditions for transversality when employing three distinct quantum error correction codes.

\balance
\bibliographystyle{IEEEtran}
\bibliography{ASSref}

\end{document}